\documentclass[conference]{IEEEtran} 
\usepackage{latexsym,amssymb,amsmath,graphicx,bbm}
\usepackage{epic}
\usepackage{psfrag}
\usepackage{amsbsy}
\usepackage{mathptm} 
\usepackage[bbgreekl]{mathbbol}
\usepackage{cmgreek} 
\usepackage{epsfig}
\usepackage{cite}
\usepackage[small,normal,bf,up]{caption2}

%

\newtheorem{lemma}{Lemma}[section]
\newtheorem{theorem}{Theorem}[section]
\newtheorem{corollary}{Corollary}[section]
\DeclareMathOperator{\prob}{{\text{\rm P}}}

\newcommand{\D}[2]{\frac{\partial #1}{\partial #2}}
\newcommand{\un}[1]{\underline{#1}}
\newcommand{\brc}[1]{\left({#1}\right)}

\newcommand{\abs}[1]{\left\lvert#1\right\rvert}

\newcommand{\openone}{\leavevmode\hbox{\small1\normalsize\kern-.33em1}}

\begin{document}
\renewcommand{\textfraction}{0}

\title{Bounds on Thresholds Related to Maximum Satisfiability of Regular Random Formulas}

\author{\IEEEauthorblockN{Vishwambhar Rathi\IEEEauthorrefmark{1}\IEEEauthorrefmark{2}, 
 Erik Aurell\IEEEauthorrefmark{1}\IEEEauthorrefmark{3},  
Lars Rasmussen\IEEEauthorrefmark{1}\IEEEauthorrefmark{2}, Mikael Skoglund\IEEEauthorrefmark{1}\IEEEauthorrefmark{2}\\
\{vish@, eaurell@, lars.rasmussen@ee, skoglund@ee\}.kth.se}
\IEEEauthorblockA{\IEEEauthorrefmark{1}School of Electrical Engineering}
\IEEEauthorblockA{\IEEEauthorrefmark{2}KTH Linnaeus Centre ACCESS}
\IEEEauthorblockA{KTH-Royal Institute of Technology, Stockholm, Sweden}
\IEEEauthorblockA{\IEEEauthorrefmark{3}Dept. Information and Computer Science, TKK-Helsinki University of Technology, 
Espoo, Finland}
}

\maketitle

\begin{abstract}
We consider the regular balanced model of formula generation in conjunctive
normal form (CNF) introduced by Boufkhad, Dubois, Interian, and Selman. We say
that a formula is $p$-satisfying if there is a truth assignment satisfying
$1-2^{-k}+p 2^{-k}$ fraction of clauses. Using the first moment method we
determine upper bound on the threshold clause density such that there are no
$p$-satisfying assignments with high probability above this upper bound.  There
are two aspects in deriving the lower bound using the second moment method. The
first aspect is, given any $p \in (0,1)$ and $k$, evaluate the lower bound on
the threshold. This evaluation is numerical in nature. The second
aspect is to derive the lower bound as a function of $p$ for large enough $k$.
We address the first aspect and evaluate the lower bound on the $p$-satisfying
threshold using the second moment method. We observe that as $k$ increases the
lower bound seems to converge to the asymptotically derived lower bound for
uniform model of formula generation by Achlioptas, Naor, and Peres. 
\end{abstract}

\section{Regular Formulas and Motivation}
A literal of a boolean variable is the variable itself or its negation. 
A clause is a disjunction (OR) of $k$ literals. A formula is a conjunction (AND)
of a finite set of clauses. A $k$-SAT formula is a formula where each clause is
a disjunction of $k$ literals.  A {\it legal} clause is one in which there
are no repeated or complementary literals. Using the terminology of
\cite{BDIS05}, we say that a formula is {\it simple} if it consists of only
legal clauses. A {\it configuration} formula is not necessarily legal. A
satisfying (SAT) assignment of a formula is a truth assignment of variables
for which the formula evaluates to true i.e.~all the clauses evaluates to true.  We
denote the number of variables by $n$, the number of clauses by $m$, and the clause
density i.e.  the ratio of clauses to variables by $\alpha = \frac{m}{n}$. 
We denote the binary entropy function by $h(x) \triangleq -x \ln(x) - (1-x) \ln(1-x)$, 
where the logarithm is the natural logarithm.
 
The popular, uniform k-SAT model generates a formula by selecting uniformly and
independently $m$-clauses from the set of all $2^k \binom{n}{k}$ $k$-clauses. In
this model, the literal degree can vary. We are interested in the model where
the literal degree is constant, which was introduced in
\cite{BDIS05}. Suppose each literal has degree $r$.  Then $2 n r = k m$, 
which gives $\alpha = 2 r / k$. Hence $\alpha$ can only take values from a 
discrete set of possible values.  This problem can be circumvented by allowing each literal to 
take two possible values for a degree. However, in this paper we consider the case 
where all  the literals have degree $r$ due to space restriction. 
A formula is represented by a bipartite graph. The left vertices represent the
literals and right vertices represent the clauses. A literal is connected to a
clause if it appears in the clause.  There are $k \alpha n$ edges coming out
from all the literals and $k \alpha n$ edges coming out from the clauses. We
assign the labels from the set $\mathcal{E}= \{1, \dots, k \alpha n\}$ to edges
on both sides of the bipartite graph. In order to generate a formula, we
generate a random permutation $\Pi$ on $\mathcal{E}$. Now we connect an edge
$i$ on the literal node side to an edge $\Pi(i)$ on the clause node side. This
gives rise to a regular random k-SAT formula. Note that not all the formulas
generated by this procedure are simple. However, it was shown in \cite{BDIS05}
that the threshold is  same for this collection of formulas and the
collection of simple formulas. Thus, we can work with the collection of
configuration formulas generated by this procedure. Note that this procedure is 
similar to the procedure of generating regular LDPC codes \cite{RiU08}. 


The regular random $k$-SAT formulas are of interest because such instances are
computationally harder than the uniform $k$-SAT instances. This was
experimentally observed in \cite{BDIS05}, where the authors also derived upper
and lower bounds on the satisfiability threshold for regular random $3$-SAT.
In \cite{RARS10}, upper bound on the satisfiability threshold for 
any $k \geq 3$ was derived using the first moment method.  It was shown that as
$k$ increases, the upper bound converge to the corresponding bounds on the
threshold of the uniform model \cite{AcP04, AcM06}. 

For uniform model, in a series of breakthrough papers, Achlioptas
and Peres in \cite{AcP04} and Achlioptas and Moore in \cite{AcM06}, derived
almost matching lower bounds with the upper bound by carefully applying the
second moment method to balanced satisfying assignments. In \cite{AcM06}, based
on their belief that the simple application of second moment method should work
for symmetric problems, Achlioptas and Moore posed the question of its success
 for regular random $k$-SAT. In an attempt to answer 
this question, the lower bound using the second moment method was evaluated in
\cite{RARS10}. As the evaluation of the lower bound was numerical in nature
(though exact), it was observed that the lower bound also converges to the
corresponding lower bound for the uniform model as $k$ increases.  

In this work we are interested in the maximum satisfiability problem over
regular random formulas. We say that a formula is $p$-satisfying if there is a 
truth assignment satisfying $c(p) \triangleq 1-2^{-k}+p 2^{-k}$ fraction of its
clauses, $p \in (0,1)$.  The number of $p$-satisfying truth assignments is
denoted by $N(n, \alpha, p)$.  We define the following quantities related to
$p$-satisfiability:  
\begin{eqnarray}
\alpha(p) & \triangleq & \sup\{\alpha: \text{A regular random formula}, \nonumber \\
&& \quad\quad\quad\text{is $p$-satisfiable w.h.p.}\}, \label{eq:defalpha}\\
\alpha^*(p) & \triangleq & \inf\{\alpha:\text{A regular random formula}, \nonumber \\
&& \quad\quad\quad\text{is not $p$-satisfiable w.h.p.}\}. \label{eq:defalphastar} 
\end{eqnarray}
Note that $\alpha(p) \leq \alpha^*(p)$. In \cite{ANP07}, for the uniform model
Achlioptas, Naor, and Peres derived lower bound on $\alpha(p)$ which almost
matches with the upper bound on $\alpha^*(p)$ derived via the first moment
method. The lower bound was obtained by a careful application of the second moment 
method. 

We derive upper bound on $\alpha^*(p)$ by applying the first moment method to
$N(n, \alpha, p)$.  The obtained upper bound matches with the corresponding
bound for the uniform model. We evaluate a lower bound on $\alpha(p)$ by
applying the second moment method to $N(n, \alpha, p)$. We observe that for
increasing $k$ the lower bound seems to converge to the corresponding bound for
the uniform model. In the next section, we obtain upper bound on $\alpha^*(p)$
by the first moment method. 

Due to space limitations, some of the arguments are accompanied by short explanations. 
Further details can be found in the forthcoming journal submission \cite{RARS10t}. 

\section{Upper Bound on Threshold via First Moment}
Let $X$ be a non-negative integer-valued random variable and $E(X)$ be its expectation. 
Then the first moment method gives: $\prob\brc{X > 0} \leq E(X)$. 
Note that by choosing $X$ to be the number of solutions of a random formula, we can 
obtain an upper bound on the threshold $\alpha^*(p)$ beyond which no $p$-satisfying solution exists 
with probability one. This upper bound corresponds to the largest value of $\alpha$ at which 
 the average number of $p$-satisfying solutions goes to zero as $n$ tends to infinity. 
In the following lemma, we derive the first moment of $N(n, \alpha, p)$ 
for  regular random $k$-SAT for $k \geq 2$.   
\begin{lemma}\label{lem:moment1}
Let $N(n, \alpha, p)$ be the number of $p$-satisfying assignments for a random regular k-SAT 
formula. Then\footnote{We assume that $k \alpha n$ is an even integer.}, 
\begin{eqnarray}
E(N(n, \alpha, p)) & = & 2^n \binom{\alpha n}{c(p) \alpha n} \frac{\brc{\brc{\frac{k \alpha n}{2}}!}^2}{(k \alpha n)!} \times 
\nonumber \\ 
&&  \mathrm{coef}\brc{\brc{\frac{s(x)}{x}}^{c(p) \alpha n}, x^{\brc{\frac{k}{2}-c(p)} \alpha n}} 
\label{eq:moment1},  
\end{eqnarray}
where $s(x) = (1+x)^k - 1$  and
$\mathrm{coef}\brc{\brc{\frac{s(x)}{x}}^{c(p) \alpha n}, x^{\brc{\frac{k}{2}-c(p)}
\alpha n}}$ denotes the coefficient of $x^{\brc{\frac{k}{2}-c(p)} \alpha n}$ in
the expansion of $\brc{\frac{s(x)}{x}}^{c(p) \alpha n}$. 
\end{lemma}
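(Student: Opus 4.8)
The plan is to combine the symmetry of the configuration model with a direct enumeration of matchings, organized by how many true literal-endpoints each clause receives. First I would use linearity of expectation to write $E(N(n,\alpha,p)) = \sum_\sigma \prob(\sigma \text{ is } p\text{-satisfying})$, the sum ranging over all $2^n$ truth assignments $\sigma$. Since the permutation $\Pi$ is uniform and the edge labelling treats all literals symmetrically, the probability that a fixed $\sigma$ satisfies exactly $c(p)\alpha n$ clauses is the same for every $\sigma$, so $E(N(n,\alpha,p)) = 2^n\,\prob(\sigma_0 \text{ is } p\text{-satisfying})$ for the all-true assignment $\sigma_0$. This explains the factor $2^n$. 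Under $\sigma_0$ the $n$ positive literals are true and the $n$ negative ones false; as each literal has degree $r = k\alpha/2$, exactly $k\alpha n/2$ of the $k\alpha n$ literal-endpoints emanate from true literals and $k\alpha n/2$ from false literals.

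Next I would compute $\prob(\sigma_0 \text{ is } p\text{-satisfying})$ as a ratio of matching counts, the denominator being the total number of configurations $(k\alpha n)!$. A clause is satisfied under $\sigma_0$ iff at least one of its $k$ slots is matched to a true endpoint. The key move is to first fix the combinatorial \emph{pattern} recording, for each clause, which of its slots receive true endpoints, subject to the global constraint that the number of designated true slots equals the number $k\alpha n/2$ of true literal-endpoints. Conditioned on such a pattern, the permutations realizing it factor into a bijection of the $k\alpha n/2$ true literal-endpoints onto the true slots and a bijection of the false endpoints onto the false slots, giving $\brc{\brc{\tfrac{k\alpha n}{2}}!}^2$ permutations independently of the pattern; this yields the prefactor $\brc{\brc{\tfrac{k\alpha n}{2}}!}^2/(k\alpha n)!$.

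It then remains to count patterns in which exactly $c(p)\alpha n$ clauses are satisfied. I would choose the satisfied clauses in $\binom{\alpha n}{c(p)\alpha n}$ ways. An unsatisfied clause contributes no true slots, while a satisfied clause must receive some $j \geq 1$ of its $k$ slots as true, contributing $\binom{k}{j}$ choices marked by $x^j$; summing gives the per-clause generating function $\sum_{j=1}^k \binom{k}{j} x^j = (1+x)^k - 1 = s(x)$. Enforcing that the true slots total $k\alpha n/2$ is exactly the extraction of $\mathrm{coef}\brc{s(x)^{c(p)\alpha n}, x^{k\alpha n/2}}$, and factoring $s(x)^{c(p)\alpha n} = x^{c(p)\alpha n}\brc{s(x)/x}^{c(p)\alpha n}$ shifts this to $\mathrm{coef}\brc{\brc{s(x)/x}^{c(p)\alpha n}, x^{\brc{k/2 - c(p)}\alpha n}}$, assembling into \eqref{eq:moment1}.

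The main obstacle is the global balance constraint coupling the clauses: the total number of true clause-slots is pinned to the fixed value $k\alpha n/2$ of true literal-endpoints, which forbids treating clauses independently and is precisely what converts a product of per-clause sums into a single coefficient extraction. The secondary point requiring care is verifying that conditioning on a slot-pattern always leaves exactly $\brc{\brc{\tfrac{k\alpha n}{2}}!}^2$ consistent permutations, uniformly over patterns, so that the matching multiplicity cleanly separates from the pattern count.
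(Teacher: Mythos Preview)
Your proposal is correct and follows essentially the same route as the paper: symmetry reduces to a single reference assignment, the probability is written as a ratio over $(k\alpha n)!$ configurations, the $\brc{(k\alpha n/2)!}^2$ factor comes from independently permuting the true and false literal-endpoints onto a fixed slot-pattern, and the number of valid patterns is counted by the binomial choice of satisfied clauses together with the coefficient extraction from $s(x)^{c(p)\alpha n}$. Your explanation of the slot-pattern decomposition and why the matching multiplicity is uniform over patterns is in fact more explicit than the paper's, but the argument is the same.
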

\begin{proof} 
Due to symmetry of the formula generation, any assignment of variables has the same 
probability of being $p$-satisfying. This implies
\[
E(N(n, \alpha, p)) = 2^n \prob\brc{X=\{0,\dots,0\}\text{ is }p\text{-satisfying.}}. 
\]
The probability of the all-zero vector being $p$-satisfying is given by 
\begin{multline*}
\prob\brc{X=\{0,\dots,0\}\text{ is }p\text{-satisfying}} = \\ \frac{\text{Number of formulas for which } X=\{0,\dots,0\} 
\text{ is }p\text{-satisfying}}{\text{Total number of formulas}}.
\end{multline*}
The total number of formulas is given by $(k \alpha n)!$. The total number of formulas for which the all-zero  
 assignment is a $p$-satisfying is given by
\[
 \binom{\alpha n}{c(p) \alpha n} \brc{\brc{\frac{k \alpha n}{2}}!}^2 \mathrm{coef}\brc{s(x)^{c(p) \alpha n}, x^{\frac{k \alpha n}{2}}}. 
\] 
The binomial term corresponds to choosing $c(p)$ fraction of clauses being satisfied by the all-zero 
assignment. The factorial terms correspond to permuting the edges among true and false literals. Note 
that there are equal numbers of true and false literals. The generating function $s(x)$ corresponds to 
placing at least one positive literal in a clause. With these results and observing that 
$\mathrm{coef}\brc{s(x)^{c(p) \alpha n}, x^{\frac{k \alpha n}{2}}} = 
\mathrm{coef}\brc{\brc{s(x)/x}^{c(p) \alpha n}, x^{\brc{\frac{k}{2}-c(p)}\alpha n}},$
we obtain (\ref{eq:moment1}).
\end{proof}

{\it Remark:} The computation of the first moment of $N(n, \alpha, p)$ is similar to 
the computation of the first moment for weight distribution of regular LDPC ensembles. 
There is large body of work dealing with first moment of weight distribution. So, we 
refer to \cite{RiU08} and the references there in. 

We now state the Hayman method to approximate the coef-term which is
asymptotically correct \cite{RiU08}. 
\begin{lemma}[Hayman Method]\label{lem:hayman} 
Let $q(y) = \sum_i q_i y^i$ be a polynomial with non-negative coefficients such that 
$q_0 \neq 0$ and $q_1 \neq 0$. Define 
\begin{equation}\label{eq:defaqbq}
   a_q(y) = y \frac{d q(y)}{d y} \frac{1}{y}, \quad b_q(y) = y \frac{d a_q(y)}{d y}. 
\end{equation}
Then, 
\begin{equation}\label{eq:hayman}
	\mathrm{coef}\brc{q(y)^n, y^{\omega n}} = \frac{q(y_\omega)^n}{(y_\omega)^{\omega n} 
\sqrt{2 \pi n b_q(y_\omega)}} (1+o(1)), 
\end{equation}
where $y_\omega$ is the unique positive solution of the saddle point equation $a_q(y) = \omega$. The 
solution $y_\omega$ also satisfies 
\begin{equation}\label{eq:yomegaasinf}
y_\omega = \inf_{y > 0} \frac{q(y)^n}{y^{\omega n}}. 
\end{equation}
\end{lemma}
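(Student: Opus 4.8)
The plan is to establish \eqref{eq:hayman} by the saddle-point (large-powers) method applied to Cauchy's coefficient formula. First I would write the target coefficient as a contour integral over the circle of radius $r>0$,
\begin{equation*}
\mathrm{coef}\brc{q(y)^n, y^{\om n}} = \frac{1}{2\pi}\int_{-\pi}^{\pi} \frac{q\brc{r e^{i\theta}}^n}{r^{\om n}\,e^{i \om n\theta}}\,d\theta,
\end{equation*}
with the radius $r$ still free. The strategy is to choose $r$ so that the modulus of the integrand peaks sharply at $\theta=0$ and decays away from it, making the integral amenable to Laplace-type asymptotics concentrated near the peak.

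Next I would pin down the radius and the auxiliary quantities. Minimizing $q(y)/y^{\om}$ over $y>0$ and setting the logarithmic derivative to zero gives $y\,q'(y)/q(y)=\om$, that is $a_q(y)=\om$; the minimizer is exactly $y_\om$, which simultaneously yields the saddle-point equation and the infimum characterization \eqref{eq:yomegaasinf}. For uniqueness I would observe that $a_q(y)$ is the mean of the probability distribution placing mass proportional to $q_j y^j$ on each exponent $j$, while $b_q(y)=y\,a_q'(y)$ (as in \eqref{eq:defaqbq}) is its variance. The hypotheses $q_0\neq 0$ and $q_1\neq 0$ force at least two distinct exponents to carry positive mass, so $b_q(y)>0$ for every $y>0$; hence $a_q$ is strictly increasing and $a_q(y)=\om$ has a unique positive root $y_\om$.

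With $r=y_\om$ I would then carry out the local expansion. Setting $u=re^{i\theta}$ and differentiating, $\frac{d}{d\theta}\ln q(u)=i\,a_q(u)$ and $\frac{d^2}{d\theta^2}\ln q(u)=-b_q(u)$, so that
\begin{equation*}
\ln q\brc{r e^{i\theta}} = \ln q(r) + i\theta\, a_q(r) - \tfrac{1}{2}\theta^2 b_q(r) + O(\theta^3).
\end{equation*}
Substituting into the integral, the linear term $i n\theta\,a_q(r)$ cancels against $-i\om n\theta$ precisely because $a_q(r)=\om$, leaving a Gaussian. Extending the limits to $\pm\infty$ and evaluating $\frac{1}{2\pi}\int_{-\infty}^{\infty} e^{-\frac{1}{2} n\theta^2 b_q(y_\om)}\,d\theta = 1/\sqrt{2\pi n\, b_q(y_\om)}$ produces exactly the leading factor $q(y_\om)^n / \brc{y_\om^{\om n}\sqrt{2\pi n\, b_q(y_\om)}}$.

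The main obstacle is the rigorous tail estimate: showing that the part of the integral with $|\theta|$ bounded away from $0$ is negligible against the main term. Because $q$ is fixed and $y_\om$ does not depend on $n$, this is the elementary large-powers regime rather than full Hayman admissibility, but it still requires $\theta=0$ to be the unique maximizer of $|q(re^{i\theta})|$ on $(-\pi,\pi]$. This is where $q_0\neq 0$ and $q_1\neq 0$ enter: by the triangle inequality and non-negativity of the coefficients, $|q(re^{i\theta})|\le\sum_j q_j r^j = q(r)$, with equality only when all summands $q_j r^j e^{i j\theta}$ share a common argument; since the consecutive exponents $0$ and $1$ both occur, the arguments $0$ and $\theta$ must coincide modulo $2\pi$, forcing $\theta=0$. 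A compactness argument then gives $|q(re^{i\theta})|\le \rho\, q(r)$ with $\rho<1$ uniformly for $|\theta|\ge\delta$, so that part of the integral is $O(\rho^n)$ relative to the main term; on the intermediate range $\delta_n\le|\theta|\le\delta$, with $\delta_n\to 0$ chosen as a small negative power of $n$, the quadratic term already forces decay of order $e^{-\Theta(n\delta_n^2)}$, controlling the $O(\theta^3)$ error. Combining the central Gaussian contribution with these negligible tails yields \eqref{eq:hayman} with the claimed $(1+o(1))$.
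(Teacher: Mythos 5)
Your proof is correct, but there is no proof in the paper to compare it against: the paper states this lemma as a known result, citing \cite{RiU08} (Appendix D of that book), and never proves it. Your argument --- Cauchy's coefficient formula on the circle of radius $y_\omega$, identifying $a_q(y)$ as the mean and $b_q(y)$ as the variance of the distribution putting mass proportional to $q_j y^j$ on the exponent $j$ (which gives $b_q>0$, strict monotonicity of $a_q$, and hence uniqueness of the saddle point), the local expansion $\ln q\brc{y_\omega e^{i\theta}} = \ln q(y_\omega) + i\theta\omega - \tfrac12 \theta^2 b_q(y_\omega) + O(\theta^3)$ producing the Gaussian factor $1/\sqrt{2\pi n\, b_q(y_\omega)}$, and the tail bound using $q_0,q_1>0$ to force $\theta=0$ to be the strict maximizer of $\abs{q(y_\omega e^{i\theta})}$ --- is the standard large-powers saddle-point proof, essentially the one found in the cited reference, with the splitting into central, intermediate, and far regions handled correctly. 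Two small remarks. First, you silently (and correctly) repaired two typos in the statement: in \eqref{eq:defaqbq} the definition as printed simplifies to $a_q(y)=q'(y)$, whereas the intended (and your) definition is $a_q(y)=y\,q'(y)/q(y)$; and \eqref{eq:yomegaasinf} as printed equates $y_\omega$ with the infimum value, whereas the intended meaning --- which you use --- is that $y_\omega$ is the minimizer of $q(y)/y^{\omega}$ over $y>0$, equivalently $q(y_\omega)^n/y_\omega^{\omega n}=\inf_{y>0} q(y)^n/y^{\omega n}$. Second, both the lemma and your proof implicitly assume that the saddle equation $a_q(y)=\omega$ has a positive root at all, which requires $\omega$ to lie strictly between $0$ and $\deg q$; this is harmless here since the paper only invokes the lemma in that regime, but it is worth stating as a hypothesis.
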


We now use Lemma \ref{lem:hayman} to compute the expectation of the total number of 
$p$-satisfying assignments. 
\begin{lemma}\label{lem:moment1hayman}
Let $N(n, \alpha, p)$ denote the total number of $p$-satisfying 
 assignments of a regular random k-SAT 
formula. Let $t(x)=\frac{s(x)}{x}$,  where $s(x)$ is defined in Lemma \ref{lem:moment1}. Then, 
\begin{multline}\label{eq:moment1hayman}
E(N(n, \alpha, p)) = \sqrt{\frac{k}{8 \pi c(p)^2 (1-c(p)) b_t(x_k) \alpha n}}  
 e^{n \brc{ (1 - k \alpha) \ln(2)}} \\
 e^{n \brc{\alpha h(c(p)) + c(p) \alpha \ln\brc{t(x_k)}-\brc{\frac{k \alpha}{2}-\alpha c(p)} 
\ln\brc{x_k}}} (1+o(1)), 
\end{multline}
where $x_k$ is the solution of $a_t(x) = \frac{k}{2 c(p)}-1$. The quantity $a_t(x)$ and  
 $b_t(x)$ are defined according to (\ref{eq:defaqbq}).
\end{lemma}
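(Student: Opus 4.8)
The plan is to start from the exact expression for $E(N(n,\alpha,p))$ in Lemma~\ref{lem:moment1} and asymptotically evaluate each of its three building blocks separately --- the binomial coefficient, the ratio of factorials, and the coef-term --- and then recombine, keeping careful track of both the exponential-in-$n$ factors and the $\sqrt{n}$-order prefactors. First I would handle the binomial and the factorials via Stirling's formula $\ln(j!) = j\ln j - j + \tfrac12\ln(2\pi j) + o(1)$. For the binomial this gives the standard entropy approximation $\binom{\alpha n}{c(p)\alpha n} = \frac{e^{\alpha n\, h(c(p))}}{\sqrt{2\pi \alpha n\, c(p)(1-c(p))}}(1+o(1))$. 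For the ratio of factorials, writing $M = k\alpha n$, a short Stirling computation yields $\frac{\brc{(M/2)!}^2}{M!} = 2^{-M}\sqrt{\pi M/2}\,(1+o(1))$, so its leading exponential contribution is $2^{-k\alpha n}$ and its polynomial prefactor is $\sqrt{\pi k\alpha n/2}$.

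Next I would treat the coef-term with the Hayman method of Lemma~\ref{lem:hayman}. Taking $q = t$ with $t(x) = s(x)/x = \frac{(1+x)^k-1}{x} = \sum_{i=0}^{k-1}\binom{k}{i+1}x^i$, I must first verify admissibility: the coefficients are non-negative, $t_0 = \binom{k}{1} = k \neq 0$, and $t_1 = \binom{k}{2} = \tfrac{k(k-1)}{2} \neq 0$ for $k\ge 2$, so the hypotheses hold. Matching $\mathrm{coef}\brc{t(x)^N, x^{\omega N}}$ against the coef-term forces $N = c(p)\alpha n$ and $\omega = \frac{\brc{\frac{k}{2}-c(p)}\alpha n}{c(p)\alpha n} = \frac{k}{2c(p)} - 1$, so the saddle-point equation $a_t(x) = \omega$ is precisely $a_t(x) = \frac{k}{2c(p)}-1$, whose positive root is the $x_k$ of the statement. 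Invoking (\ref{eq:hayman}) then gives $\mathrm{coef} = \frac{t(x_k)^{c(p)\alpha n}}{x_k^{\brc{\frac{k}{2}-c(p)}\alpha n}\sqrt{2\pi c(p)\alpha n\, b_t(x_k)}}(1+o(1))$.

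Finally I would multiply the three pieces together with the leading $2^n$. Collecting the exponential-in-$n$ factors $2^n$, $e^{\alpha n\, h(c(p))}$, $2^{-k\alpha n}$, $t(x_k)^{c(p)\alpha n}$, and $x_k^{-\brc{\frac{k}{2}-c(p)}\alpha n}$ reproduces $e^{n\brc{(1-k\alpha)\ln 2}}\, e^{n\brc{\alpha h(c(p)) + c(p)\alpha\ln t(x_k) - \brc{\frac{k\alpha}{2}-\alpha c(p)}\ln x_k}}$, as claimed. Collecting the three $\sqrt{n}$-order prefactors under a single radical collapses $\frac{1}{\sqrt{2\pi\alpha n\, c(p)(1-c(p))}}\cdot\sqrt{\frac{\pi k\alpha n}{2}}\cdot\frac{1}{\sqrt{2\pi c(p)\alpha n\, b_t(x_k)}}$ into $\sqrt{\frac{k}{8\pi c(p)^2(1-c(p)) b_t(x_k)\alpha n}}$, the stated prefactor, since the $\pi$'s, the factors of $2$, and the powers of $n$ cancel cleanly.

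I expect the main obstacle to be the Hayman admissibility check rather than the algebra: I must confirm that the target power $\brc{\frac{k}{2}-c(p)}\alpha n$ lies strictly between $0$ and the degree $(k-1)c(p)\alpha n$ of $t(x)^{c(p)\alpha n}$ --- equivalently $0 < \omega < k-1$ --- so that $a_t$, which increases monotonically from $0$ to $k-1$, has a unique positive preimage $x_k$. The lower inequality is $c(p) < k/2$, automatic because $c(p) < 1 \le k/2$; the upper inequality is $c(p) > \tfrac12$, which holds because $c(p) = 1 - 2^{-k} + p\,2^{-k} > 1 - 2^{-k} \ge \tfrac12$ for every $p \in (0,1)$ and $k \ge 2$. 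Once this is in place the rest is the routine bookkeeping of prefactors described above.
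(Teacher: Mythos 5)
Your proposal is correct and follows exactly the route the paper takes: the paper's own proof is a one-line appeal to Stirling's approximation for the binomial/factorial terms and the Hayman approximation of Lemma \ref{lem:hayman} for the coef-term, which is precisely what you carry out (and your bookkeeping of the exponential factors and the $\sqrt{n}$-prefactors, as well as the identification $\omega = \frac{k}{2c(p)}-1$, all check out). Your added verification of Hayman admissibility ($t_0, t_1 \neq 0$ and $0 < \omega < k-1$ via $\frac12 < c(p) < 1 \le \frac{k}{2}$) is a detail the paper leaves implicit, so your write-up is simply a fully worked version of the paper's proof.
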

\begin{proof}
Using Stirling's approximation for the binomial terms (see \cite[p. 513]{RiU08})
 and Hayman approximation for the $\text{coef}$ term 
from Lemma \ref{eq:hayman}  gives the desired result. 
\end{proof}

In the following lemma we derive explicit upper bounds on the clause density for the existence of 
$p$-satisfying assignments.
\begin{lemma}[Upper bound]\label{lem:regupperbound}
Let $\alpha^*(p)$ be as defined in (\ref{eq:defalphastar}). Define
$\alpha_u^*(p)$ to be the upper bound on $\alpha^*(p)$ obtained by the first moment method. Then,
\begin{equation}\label{eq:ubalpha}
\alpha^*(p) \leq  \alpha_u^*(p) \triangleq \frac{2^k \ln(2)}{p + (1-p) \ln(1-p)}.
\end{equation}
\end{lemma}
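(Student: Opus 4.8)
The plan is to apply the first moment method directly: since $\prob\brc{N(n,\alpha,p) > 0} \le E(N(n,\alpha,p))$, the formula fails to be $p$-satisfiable with high probability at every density $\alpha$ for which $E(N(n,\alpha,p)) \to 0$, and hence $\alpha^*(p)$ is bounded above by the smallest such density. So the whole task reduces to locating the density at which the first moment stops growing exponentially. I would read off the exponential growth rate from Lemma~\ref{lem:moment1hayman}; writing the exponent as $n\,\Phi(\alpha)$, the crucial structural observation is that the saddle point $x_k$ solving $a_t(x_k) = \frac{k}{2c(p)}-1$ does \emph{not} depend on $\alpha$, so $\Phi(\alpha)$ is affine in $\alpha$, of the form $\Phi(\alpha) = \ln 2 - \alpha\,\rho$ for a constant $\rho=\rho(k,p)>0$ equal to the per-clause decay rate. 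The first moment then vanishes precisely when $\alpha > \ln 2/\rho$, the upper bound is $\alpha_u^*(p) = \ln 2/\rho$, and everything comes down to evaluating $\rho$ in closed form.

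To pin down $\rho$ I would avoid the implicitly defined saddle point and instead bound the $\mathrm{coef}$ term crudely. Because $t(x)=s(x)/x$ has non-negative coefficients, $\mathrm{coef}\brc{t(x)^{c(p)\alpha n}, x^{(\frac k2 - c(p))\alpha n}} \le t(x)^{c(p)\alpha n}/x^{(\frac k2 - c(p))\alpha n}$ for every $x>0$, and the convenient choice $x=1$ gives the bound $(2^k-1)^{c(p)\alpha n}$. Substituting this together with the Stirling estimates $\binom{\alpha n}{c(p)\alpha n} \le e^{\alpha n\,h(c(p))}$ and $((k\alpha n/2)!)^2/(k\alpha n)! \le 2^{-k\alpha n}\,\mathrm{poly}(n)$ into Lemma~\ref{lem:moment1} collapses the regular-model first moment onto the uniform-model one, which is exactly why the bound should coincide with the uniform model. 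The per-clause exponent becomes $h(c(p)) - k\ln 2 + c(p)\ln(2^k-1)$, and using $\ln(2^k-1)=k\ln2+\ln(1-2^{-k})$ together with $1-c(p)=(1-p)2^{-k}$, the $k\ln2$ terms cancel and this reduces to $-\rho$ with $\rho = D\!\brc{c(p)\,\|\,1-2^{-k}}$, the relative entropy between $c(p)$ and $1-2^{-k}$.

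The final, and main, step is to show that this relative entropy is bounded below by the clean rate in the statement, i.e.\ $D\!\brc{c(p)\,\|\,1-2^{-k}} \ge 2^{-k}\brc{p+(1-p)\ln(1-p)}$, since this turns $\alpha_u^*(p)=\ln2/\rho$ into $\frac{2^k\ln2}{p+(1-p)\ln(1-p)}$ as an upper bound. The cleanest route I see is to re-derive the tail estimate directly: the all-zero assignment is $p$-satisfying iff the number of violated clauses is at most $(1-p)2^{-k}\alpha n$, a lower-tail large-deviation event for a count of mean $2^{-k}\alpha n$. Applying a Chernoff bound with the explicit exponential tilt $\theta = \ln\frac{1}{1-p}$ and then using $\ln(1-2^{-k}p) \le -2^{-k}p$ yields the probability bound $\exp\brc{-2^{-k}\brc{p+(1-p)\ln(1-p)}\,\alpha n}$ with no residual lower-order terms, which is exactly the required inequality. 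I expect the main obstacle to be precisely this last simplification: the exact saddle-point rate $D\!\brc{c(p)\,\|\,1-2^{-k}}$ carries higher-order corrections in $2^{-k}$, so one must verify that replacing it by the advertised closed form only \emph{loosens} the upper bound (the tilt argument guarantees this, since a suboptimal $\theta$ can only decrease the exponent), and one must also check that the $x=1$ coefficient bound and the Stirling estimates cost only $\mathrm{poly}(n)$ factors that do not shift the threshold.
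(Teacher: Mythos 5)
Your proposal is correct, and its skeleton coincides with the paper's: the paper also applies the first moment method, bounds the coefficient term via the infimum characterization (\ref{eq:yomegaasinf}) evaluated at $x=1$ (exactly your elementary bound $\mathrm{coef}(t(x)^{N}, x^{M}) \le t(x)^{N}/x^{M}$ with $x=1$), and arrives at the intermediate bound (\ref{eq:temp4}), $\alpha^*(p) \le \ln(2)/\bigl(k\ln 2 - h(c(p)) - c(p)\ln(2^k-1)\bigr)$. Where you genuinely diverge is the closing inequality, i.e.\ that this denominator is at least $2^{-k}\bigl(p+(1-p)\ln(1-p)\bigr)$: the paper disposes of it in one sentence by ``considering the difference and showing its positivity'' (direct calculus, no details given), whereas you identify the denominator as the relative entropy $D\bigl(c(p)\,\|\,1-2^{-k}\bigr) = D\bigl((1-p)2^{-k}\,\|\,2^{-k}\bigr)$ and lower-bound it by a suboptimal Chernoff tilt $\theta=\ln\frac{1}{1-p}$ together with $-\ln(1-2^{-k}p)\ge 2^{-k}p$. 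This is a valid and arguably more illuminating route---it explains \emph{why} the closed form in (\ref{eq:ubalpha}) appears and why it matches the uniform model---at the cost of invoking the duality fact $D(a\|q) = \sup_{\theta\ge 0}\bigl(-\theta a - \ln(1-q+qe^{-\theta})\bigr)$, which the paper's calculus route avoids. One caution on phrasing: in the regular model the number of violated clauses is \emph{not} a sum of independent indicators (the edges form a random permutation), so your ``re-derive the tail estimate directly'' step cannot literally be a Chernoff bound on the regular ensemble; it must be read as the deterministic Legendre-duality inequality $D(a\|q)\ge -\theta a - \ln\bigl(1-q+qe^{-\theta}\bigr)$ applied at your chosen $\theta$, which is model-independent. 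With that reading (which your own closing remark about suboptimal tilts already supplies), the argument is airtight.
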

\begin{proof}
Using (\ref{eq:yomegaasinf}), we obtain the following upper bound on the exponent of  $E(N(n, \alpha, p)))$ for any 
$x > 0$, 
\begin{multline}\label{eq:ubENx} 
\lim_{n \to \infty} \frac{\ln\brc{E(N(n, \alpha))}}{n} \leq  
(1 - k \alpha) \ln(2) + \alpha h(c(p)) + \\
c(p) \alpha \ln(t(x)) - \alpha \brc{\frac{k}{2}-c(p)} \ln(x).
\end{multline}
We substitute $x=1$ in (\ref{eq:ubENx}) and obtain the following upper bound 
on the $p$-satisfiability threshold, 
\begin{equation}\label{eq:temp4}
\alpha^*(p) \leq \frac{\ln(2)}{k \ln(2)-h(c(p))-c(p) \ln\brc{2^k-1}}. 
\end{equation}
We complete the proof by showing that the denominator in (\ref{eq:temp4}) is
lower bounded by $2^{-k} \brc{p + (1-p) \ln(1-p)}$. This can be easily shown by
considering their difference and then showing its positivity.  
\end{proof}

Note that the upper bound coincides with the upper bound derived for uniform formulas in 
\cite{ANP07}. In the next section we use the second moment method to obtain lower bound 
on $\alpha(p)$ defined in (\ref{eq:defalpha}). 
\section{Second Moment}
In \cite{ANP07}, almost matching lower bounds on the $p$-satisfiability threshold of 
uniform formulas were derived using the second moment method.
 The second moment method is governed by the following equation
\begin{equation}\label{eq:smmethod}
  \prob\brc{X > 0} \geq \frac{E(X)^2}{E(X^2)}, 
\end{equation}
where $X$ is a non-negative random variable. Before we use the second moment 
method, we present the following theorem from \cite{BFU93} and its corollary 
given in \cite{ANP07}. Proof of both theorem and the corollary are identical 
for the uniform model and the regular model.  
\begin{theorem}[\cite{BFU93}]
Let $U_k(n, \alpha)$ be the maximum number of clauses satisfied (over all the truth assignments) 
of a regular random formula with $n$ variables and $m$ clauses. Then
\[
\prob\brc{\left|U_k(n, \alpha) - E\brc{U_k(n, \alpha)}\right| > t} < 2 \exp\brc{-\frac{2 t^2}{\alpha n}}.
\]
\end{theorem}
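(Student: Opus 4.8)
The plan is to recognize $U_k(n,\alpha)$ as a $1$-Lipschitz function of the $m=\alpha n$ clauses and to apply a bounded-difference (McDiarmid-type) concentration inequality. The first step is to establish the key Lipschitz property: if two formulas $F$ and $F'$ differ in exactly one clause, then $\abs{U_k(F)-U_k(F')}\le 1$. To see this, fix any truth assignment $\sigma$ and let $S_\sigma(F)$ be the number of clauses of $F$ satisfied by $\sigma$. Modifying a single clause changes $S_\sigma$ by at most $1$, since that clause contributes either $0$ or $1$ to the count. Because $U_k(F)=\max_\sigma S_\sigma(F)$ and the pointwise maximum of a family of functions, each of which changes by at most $1$, also changes by at most $1$, the claim follows.

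Next I would set up the clause-exposure Doob martingale. Let $\mathcal{F}_j$ be the $\sigma$-algebra generated by the connections of the first $j$ clauses, and put $Z_j=E[U_k\mid\mathcal{F}_j]$, so that $Z_0=E\brc{U_k}$ and $Z_m=U_k$. The Lipschitz property above, combined with a resampling/coupling argument on the connections of a single clause, yields bounded martingale increments $\abs{Z_j-Z_{j-1}}\le 1$ for every $j$. Applying the bounded-differences inequality in its sharp (McDiarmid) form with $\sum_{j=1}^{m}c_j^2=m=\alpha n$ then gives
\[
\prob\brc{\abs{U_k(n,\alpha)-E\brc{U_k(n,\alpha)}}>t}<2\exp\brc{-\frac{2t^2}{\alpha n}},
\]
which is exactly the stated bound. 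In the uniform model the $m$ clauses are drawn independently, so McDiarmid's inequality applies verbatim and delivers the factor $2$ in the exponent directly.

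The main obstacle is the dependence introduced by the configuration (permutation) model: here the clause connections are coupled through the single random permutation $\Pi$ on $\mathcal{E}$, so the coordinates are not independent and the plain i.i.d.\ version of McDiarmid does not immediately apply. I therefore expect the delicate step to be verifying $\abs{Z_j-Z_{j-1}}\le 1$ rigorously, i.e.\ showing that conditioning on one additional clause's connections cannot shift the conditional expectation of $U_k$ by more than $1$ despite the fact that revealing a clause depletes the remaining edge pool and thereby alters the law of the later clauses. This is handled by the random-permutation analogue of the bounded-differences inequality: one exchanges the endpoints of a single pair of edges and tracks the effect on $U_k$ through the $1$-Lipschitz property. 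Since a single transposition changes at most one clause's satisfaction status for any fixed $\sigma$, the increment stays bounded by $1$ and the constant $2$ in the exponent is preserved. This is precisely why the argument is identical for the uniform and the regular models.
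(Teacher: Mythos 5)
The first thing to note is that the paper contains no proof of this theorem at all: it cites \cite{BFU93} (where the bound is proved for the \emph{uniform} model, with independent clauses, by exactly your McDiarmid argument) and simply asserts that ``the proof is identical for the uniform model and the regular model.'' So your uniform-model half is correct and is essentially the cited proof. The substance of the statement as it appears in this paper, however, is the regular (permutation/configuration) model, and that is where your argument has a genuine gap.

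Two concrete problems. First, your claim that ``a single transposition changes at most one clause's satisfaction status for any fixed $\sigma$'' is false as stated: a transposition of $\Pi$ swaps the literals occupying two clause-side positions, so \emph{two} clauses can change status. The Lipschitz constant $1$ is nevertheless correct, but for a different reason: the true literal that leaves one clause enters the other, so when both statuses change they change in opposite directions and the net count $S_\sigma$ moves by at most $1$. This is fixable. Second, and more seriously, the permutation analogue of the bounded-differences inequality runs over the $N = k\alpha n$ entries of $\Pi$ (the edges), not over the $m=\alpha n$ clauses; with transposition-Lipschitz constant $1$ it yields $2\exp\brc{-2t^2/(k\alpha n)}$, which is weaker than the stated bound by a factor of $k$ in the exponent. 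Trying to recover $\alpha n$ exposure steps by revealing all $k$ edges of a clause at once makes things worse, not better: the coupling then involves up to $k$ edge swaps whose effects need \emph{not} cancel (e.g., the $k$ true literals of clause $j$ in one configuration can be scattered into $k$ distinct, otherwise-unsatisfied clauses in the other), so the pathwise difference is of order $k$ and the exponent degrades to $\Theta\brc{t^2/(k^2\alpha n)}$. Hence your concluding claim that ``the constant $2$ in the exponent is preserved'' does not follow from the argument you give; what you actually prove is concentration with a $k$-dependent constant. That weaker tail bound is still sufficient for everything the paper uses this theorem for (Corollary 3.1 only needs a sub-Gaussian tail with \emph{some} positive constant, and $k$ is fixed), but it is not the inequality as stated, and obtaining the exact constant $2t^2/(\alpha n)$ in the permutation model would require a genuinely different argument.
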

\begin{corollary}[\cite{ANP07}]\label{cor:problower}
Assume that there exists $c=c(k, p, r)$ such that for $n$ large enough, a regular random 
formula is $p$-satisfiable with probability greater than $n^{-c}$. Then a regular random 
formula is $p'$-satisfiable with high probability for every constant $p' < p$. 
\end{corollary}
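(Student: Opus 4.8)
The plan is to translate $p$-satisfiability into a statement about the random variable $U_k(n,\alpha)$ and then exploit the sharp concentration supplied by the preceding theorem. Since the number of clauses is $m = \alpha n$, a formula is $p$-satisfiable exactly when $U_k(n,\alpha) \geq c(p)\,\alpha n$. Thus the hypothesis reads $\prob\brc{U_k(n,\alpha) \geq c(p)\,\alpha n} > n^{-c}$, and the goal is to establish $\prob\brc{U_k(n,\alpha) \geq c(p')\,\alpha n} \to 1$ for every fixed $p' < p$. The structural observation that drives everything is that $c(p) - c(p') = (p-p')\,2^{-k}$ is a strictly positive constant, so the target threshold $c(p')\,\alpha n$ sits a \emph{linear} distance below $c(p)\,\alpha n$, whereas the typical fluctuations of $U_k$ around its mean are only of order $\sqrt{n}$.

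First I would use the hypothesis together with concentration to pin the mean $\mu := E\brc{U_k(n,\alpha)}$ down from below. Setting $t = \lambda\sqrt{\alpha n \ln n}$, the theorem gives $\prob\brc{U_k(n,\alpha) > \mu + t} < 2\exp\brc{-2\lambda^2 \ln n} = 2\,n^{-2\lambda^2}$; choosing $\lambda$ so that $2\lambda^2 > c$ makes this bound smaller than $n^{-c}$ for all large $n$. Comparing with the hypothesis $\prob\brc{U_k(n,\alpha) \geq c(p)\,\alpha n} > n^{-c}$, the event $\{U_k \geq c(p)\,\alpha n\}$ cannot be contained in $\{U_k > \mu + t\}$, for otherwise its probability would be at most $n^{-c}$. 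This forces $c(p)\,\alpha n \leq \mu + t$, that is, $\mu \geq c(p)\,\alpha n - \lambda\sqrt{\alpha n \ln n}$.

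With the mean controlled, the conclusion follows from a second, one-sided application of concentration. Writing $s := \mu - c(p')\,\alpha n$, the lower bound on $\mu$ yields $s \geq \brc{c(p)-c(p')}\alpha n - \lambda\sqrt{\alpha n \ln n} = (p-p')\,2^{-k}\,\alpha n - \lambda\sqrt{\alpha n \ln n}$, which for $n$ large is at least $\tfrac12 (p-p')\,2^{-k}\,\alpha n$, a quantity linear in $n$. Then $\prob\brc{U_k(n,\alpha) < c(p')\,\alpha n} \leq \prob\brc{\abs{U_k - \mu} > s} < 2\exp\brc{-2 s^2/(\alpha n)}$, and since $s$ is linear in $n$ the exponent is linear in $n$, so this probability tends to $0$. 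Hence $U_k(n,\alpha) \geq c(p')\,\alpha n$ w.h.p., i.e.\ the formula is $p'$-satisfiable w.h.p.

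The only delicate point is the first step: extracting a lower bound on $\mu$ from a hypothesis that guarantees $p$-satisfiability with merely inverse-polynomially small probability. The mechanism is that concentration squeezes the upper tail below $n^{-c}$ already at deviations of order $\sqrt{n \ln n}$, so any event of probability exceeding $n^{-c}$ must reach no further than $\mu + O\brc{\sqrt{n \ln n}}$; this is precisely what lets a polynomially rare success be upgraded into a statement about the mean. Everything afterward is robust, because the slack $(p-p')\,2^{-k}\,\alpha n$ created by lowering $p$ to $p'$ dwarfs the $\sqrt{n \ln n}$-scale fluctuations, so the choice of the constant $\lambda$ and the accompanying $o(1)$ bookkeeping are routine.
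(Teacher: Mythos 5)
Your proof is correct and is precisely the argument the paper intends: the paper itself gives no proof, deferring to the cited reference after stating the concentration theorem for $U_k(n,\alpha)$ expressly for this purpose, and that standard argument is exactly your two-step use of the tail bound — first forcing $E\brc{U_k(n,\alpha)} \geq c(p)\alpha n - O\brc{\sqrt{n\ln n}}$ from the $n^{-c}$ hypothesis, then applying concentration again across the linear gap $(p-p')2^{-k}\alpha n$ to get $p'$-satisfiability w.h.p. No gaps.
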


We now apply the second moment method to $N(n, \alpha, p)$. The calculation for $p=1$ i.e. for number of
satisfying assignments was done in \cite{RARS10}.  Our computation of the second
moment is inspired by the computation of the second moment for the weight and
stopping set distributions of regular LDPC codes in \cite{Vra06, Vra08} (see also \cite{BaB05}). We compute
the second moment in the next lemma. 
\begin{lemma}\label{lem:moment2}
Consider regular random satisfiability formulas with literal degree $r$.  
 Then the second moment of $N(n, \alpha, p)$ is given by 
\begin{multline}\label{eq:moment2}
E\brc{N(n, \alpha, p)^2} = \sum_{i=0}^n \sum_{j=\alpha n \brc{c(p)-(1-p) 2^{-k}}}^{c(p) \alpha n} 2^n \binom{n}{i}  
\binom{\alpha n}{c(p) \alpha n} \\
\binom{c(p) \alpha n}{j} \binom{\alpha n (1-c(p))}{c(p) \alpha n-j} 
 \frac{\brc{\brc{r(n -i )}!}^2 \brc{(r i)!}^2}{(k \alpha n)!} \\ 
\mathrm{coef}\brc{f(x_1, x_2, x_3)^j (s(x_1) s(x_3))^{c(p) \alpha n-j}, (x_1 x_3)^{r (n-i)} x_2^{r i}}, 
\end{multline}
where the generating function $f(x_1, x_2, x_3)$ is given by
\begin{equation}\label{eq:deff}
f(x_1, x_2, x_3) = (1+x_1+x_2+x_3)^k - (1+x_1)^k - (1+x_3)^k + 1.
\end{equation}
The generating function $s(x)$ is defined as $s(x) \triangleq (1+x)^k-1$, which is same as defined in 
Lemma \ref{lem:moment1}.
%
\end{lemma}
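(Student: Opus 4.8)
The plan is to compute the second moment directly from its definition as a sum over \emph{ordered pairs} of truth assignments,
\[
E\brc{N(n,\alpha,p)^2} = \sum_{\sigma,\tau} \prob\brc{\sigma \text{ and } \tau \text{ are both } p\text{-satisfying}},
\]
and to exploit the same symmetry used in Lemma~\ref{lem:moment1}. Since the permutation model is invariant under flipping any subset of variables, the joint probability that two assignments are simultaneously $p$-satisfying depends only on their overlap, so I can fix the first assignment to be the all-zero vector $\sigma_0$, multiply by the $2^n$ choices of $\sigma$, and then sum over the second assignment $\tau$ grouped by the number $i$ of variables on which it agrees with $\sigma_0$. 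This produces the outer factor $2^n \sum_{i=0}^n \binom{n}{i}$, after which it remains to count, for a fixed representative pair $(\sigma_0,\tau)$ of overlap $i$, the number of configuration formulas for which both assignments are $p$-satisfying and to divide by the total count $(k\alpha n)!$.

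Next I would sort the $2n$ literals into four classes according to the pair of truth values they receive under $(\sigma_0,\tau)$: true/true, false/false, true/false, and false/true. With $i$ counting the agreeing variables, the true/true and false/false classes each contain $i$ literals (hence $ri$ edges), while the true/false and false/true classes each contain $n-i$ literals (hence $r(n-i)$ edges); this is exactly what the factorial block $\brc{(r(n-i))!}^2 \brc{(ri)!}^2$ will permute, one factorial per class. I would then introduce the second index $j$ for the number of clauses satisfied by \emph{both} assignments. Counting which clauses fall into each satisfaction pattern produces the three binomials: $\binom{\alpha n}{c(p)\alpha n}$ chooses the clauses satisfied by $\sigma_0$, $\binom{c(p)\alpha n}{j}$ selects among them the $j$ also satisfied by $\tau$, and $\binom{\alpha n(1-c(p))}{c(p)\alpha n-j}$ selects, among the clauses \emph{not} satisfied by $\sigma_0$, the remaining ones satisfied by $\tau$ so that $\tau$'s total reaches $c(p)\alpha n$. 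The feasibility of this last choice, namely $c(p)\alpha n - j \le \alpha n(1-c(p))$, is equivalent to $j \ge \alpha n\brc{c(p)-(1-p)2^{-k}}$, which is precisely the lower endpoint of the stated range of $j$.

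The heart of the argument is the generating-function bookkeeping that yields the coef-term, with $x_1$ marking edges into the true/false class, $x_3$ into the false/true class, and $x_2$ into the true/true class, leaving the false/false edges untracked (the constant term). For a clause required to be satisfied by $\sigma_0$ but not $\tau$, every slot must avoid the classes that are true under $\tau$ and at least one slot must be true under $\sigma_0$; this is counted by $s(x_1)=(1+x_1)^k-1$, and symmetrically $s(x_3)$ handles the $\tau$-only clauses, giving the factor $\brc{s(x_1)s(x_3)}^{c(p)\alpha n-j}$. For a clause satisfied by both I would build the generating function by inclusion--exclusion: starting from all labelings $(1+x_1+x_2+x_3)^k$, I subtract those with no true literal under $\tau$ (using only the false/false and true/false classes, $(1+x_1)^k$) and those with no true literal under $\sigma_0$ ($(1+x_3)^k$), and add back the all-false/false labelings ($1$), which reproduces $f(x_1,x_2,x_3)$ in \eqref{eq:deff}. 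Extracting the coefficient of $(x_1 x_3)^{r(n-i)} x_2^{ri}$ then forces the tracked edge totals to equal the class sizes, and the leftover false/false slots---including those of the clauses satisfied by neither assignment---automatically absorb the remaining $ri$ edges, consistent with the surviving $(ri)!$ factorial.

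I expect the main obstacle to be this generating-function step, and in particular verifying that the identity $f=(1+x_1+x_2+x_3)^k-(1+x_1)^k-(1+x_3)^k+1$ exactly captures the event \emph{``at least one true literal under each assignment''}, together with the edge-counting check that the untracked false/false edges and the clauses satisfied by neither assignment close the total at $k\alpha n$ without an extra binomial factor. Once the variable-to-class correspondence is pinned down so that the target monomial exponents $r(n-i),r(n-i),ri$ match the class sizes, assembling the binomials, the factorials, the leading $2^n$, and the normalization $1/(k\alpha n)!$ into \eqref{eq:moment2} is routine.
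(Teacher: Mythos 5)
Your proposal is correct and follows essentially the same route as the paper's own proof: fix the first assignment to $\mathbf{0}$ by symmetry, sum over the overlap $i$ and the number $j$ of doubly-satisfied clauses, classify edges into the four truth-value types, and count formulas via the inclusion--exclusion generating function $f$ together with $s(x_1)s(x_3)$, the per-type factorials, and the $(k\alpha n)!$ normalization. Your edge-balance check (the untracked false/false edges totalling $ri$) and the derivation of the lower limit $j \geq \alpha n\brc{c(p)-(1-p)2^{-k}}$ match the paper's reasoning exactly.
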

\begin{proof}
For truth assignments $X$ and $Y$, define the indicator variable 
 $\openone_{X Y}$ which evaluates to $1$ if 
the truth assignments $X$ and $Y$ are $p$-satisfying. Then,  
\begin{eqnarray*}
E(N(n, \alpha, p)^2) & = & \sum_{X, Y \in \{0, 1\}^n} E\brc{\openone_{\bf{X} \bf {Y}}}, \\
&=& 2^n \sum_{Y \in \{0, 1\}^n} \prob\brc{{\bf 0} \text{ and } Y \text{ are } p\text{-satisfying}}. 
\end{eqnarray*}
Due to the symmetry in regular formula generation, the number of formulas for
which both $X$ and $Y$ are $p$-satisfying depends only on the number of variables 
on which $X$ and $Y$ agree. This explains the last simplification where  we
fix $X$ to be the all-zero vector.  

We want to evaluate the probability of the event that the truth assignments ${\bf 0}$ and $Y$ 
 $p$-satisfy a randomly chosen regular 
formula. This probability  depends only on the {\it overlap}, i.e., the number
of variables where the two truth assignments agree. Thus for a given overlap
$i$, we can fix $Y$ to be equal to zero in the first $i$ variables and equal to
$1$ in the remaining variables i.e. 
$Y=\{\underbrace{0,\cdots,0}_{i \text{ times}}, \underbrace{1,\cdots,1}_{n-i \text{ times}}\}$. 
This gives,
\begin{equation}\label{eq:m2temp1}
E(N(n, \alpha, p)^2) = \sum_{i=0}^{n} 2^n \binom{n}{i} \prob\brc{{\bf 0} \text{ and } Y \text{ are } 
p\text{-satisfying}}.
\end{equation}
In order to evaluate the probability that both ${\bf 0}$ and $Y$ are $p$-satisfying, 
define $\mathcal{C}=\{1,\cdots,\alpha n\}$ to be the set of clauses and  
$C_{\bf 0}$ and $C_Y$ to be the set of clauses satisfied by ${\bf 0}$ and $Y$ respectively. 
Clearly, $|C_{\bf 0}|=|C_Y|=c(p) \alpha n$. Then, 
\begin{multline}
\prob\brc{{\bf 0} \text{ and } Y \text{ are } p\text{-satisfying}} = \\
\sum_{C_{\bf 0}, C_Y \subset \mathcal{C}} \prob\brc{{\bf 0} \text{ only satisfies } C_{\bf 0} \text{ and }
Y \text{ only satisfies } C_Y}. 
\end{multline}
Again from the symmetry of the regular formula generation, we fix 
$C_{\bf 0}=\{1,\cdots,c(p) \alpha n\}$. For $|C_{\bf 0} \cap C_Y| = j$, 
we fix $C_Y=\{1, \cdots, j, c(p) \alpha n+1,\dots, 2 c(p) \alpha n-j\}$. This gives, 
\begin{multline}\label{eq:m2temp2}
 \prob\brc{{\bf 0} \text{ and } Y \text{ are } p\text{-satisfying}} = 
 \binom{\alpha n}{c(p) \alpha n} \\ 
\sum_{j=\alpha n \brc{c(p)-(1-p) 2^{-k}}}^{c(p) \alpha n}
\binom{c(p) \alpha n}{j} \binom{\alpha n (1-c(p))}{c(p) \alpha n-j} \\ 
\times \prob\brc{{\bf 0} \text{ only satisfies } C_{\bf 0} \text{ and } Y \text{ only satisfies } C_Y}. 
\end{multline}
Note that $j \geq \alpha n (c(p)-(1-p) 2^{-k})$ as $|C_{\bf 0}| + |C_{Y}| - |C_{\bf 0} \cap C_Y| \leq \alpha n$.  For a given overlap $i$ between ${\bf 0}$ and $Y$, 
we observe that there are four different types of edges 
connecting the literals and the clauses.  There are $r (n-i)$ {\bf type 1}
edges which are connected to true literals w.r.t. the {\bf 0} truth assignment and 
false w.r.t. to the $Y$ truth assignment. The $r i$ {\bf type 2} edges are connected to
true literals w.r.t. both the  truth assignments.  There are $r (n-i)$ {\bf
type 3} edges which are connected to false literals w.r.t. the {\bf 0} truth assignment and true
literals w.r.t. to the $Y$ truth assignment.  The $r i$ {\bf type 4} edges are connected to
false literals w.r.t. both the  truth assignments. Let $f(x_1, x_2, x_3)$ be
the generating function counting the number of possible edge connections to a
clause such that the clause is satisfied by both ${\bf 0}$ and $Y$. In $f(x_1, x_2, x_3)$, 
the power of $x_i$ gives the number of edges of type $i$, $i \in \{1,
2, 3\}$. A clause is satisfied by both ${\bf 0}$ and $Y$ if it is connected to at least one type $2$ edge,
else it is connected to at least one type $1$
and at least one type $3$ edge. Then the generating function $f(x_1, x_2, x_3)$
is given as in (\ref{eq:deff}).  Using this, we obtain
\begin{multline}\label{eq:m2temp3}
\prob\brc{{\bf 0} \text{ only satisfies } C_{\bf 0} \text{ and } Y \text{ only satisfies } C_Y} = \\ 
\frac{\brc{(r (n-i))!}^2 \brc{(r i)!}^2} {(k \alpha n)!} \times \\  
\mathrm{coef}\brc{f(x_1, x_2, x_3)^j (s(x_1) s(x_3))^{c(p) \alpha n-j}, (x_1 x_3)^{r (n-i)} x_2^{r i}},
\end{multline}
where $s(x_1)$ is the generating function for clauses satisfied by ${\bf 0}$ and not satisfied by $Y$ 
(similarly we define $s(x_3)$). The term $(k \alpha n)!$ is the total number of formulas. Consider a 
given formula which is satisfied by both  truth 
assignments ${\bf 0}$ and $Y$. If we permute the positions of type $1$ edges on the clause side, we obtain another 
formula having ${\bf 0}$ and $Y$ as solutions. The argument holds true for the type $i$ edges, $i \in \{2, 3, 4\}$.
This explains the term $(r (n-i))!$ in (\ref{eq:m2temp3}) which corresponds to permuting the type $1$ edges (it is squared because of the 
same contribution from type $3$ edges). Similarly, $(r i)!^2$ corresponds to permuting type $2$ and type $4$ edges. As $|C_{\bf 0} \cap C_Y|=j$, there are $j$ clauses which satisfied by both ${\bf 0}$ and 
$Y$. This explains the factor $f(x_1,x_2,x_3)^j$ in the coef term. There are 
$|C_{\bf 0} \backslash (C_{\bf 0} \cup C_Y)|)=
|C_{Y} \backslash (C_{\bf 0} \cup C_Y)|=c(p)\alpha n -j$ clauses which are satisfied by ${\bf 0}$ 
(resp. $Y$) and not satisfied by $Y$ (resp. {\bf 0}). This explains the factor 
$(s(x_1) s(x_3))^{c(p) \alpha n-j}$ in the coef term. We complete the proof by substituting 
(\ref{eq:m2temp3}) in (\ref{eq:m2temp2}), then (\ref{eq:m2temp2}) in (\ref{eq:m2temp1}). 
%
%
%
\end{proof}

In order to evaluate the second moment, we now present the multidimensional saddle point method 
in the next lemma. A detailed technical exposition of the multidimensional saddle point method 
can be found in Appendix D of \cite{RiU08}.  
\begin{theorem} \label{thm:multisaddle}
Let $\un{i}:=(r (n-i), r i, r (n-i))$,  $\un{x} = (x_1, x_2, x_3)$, and $
0 < \lim_{n \to \infty} i/n < 1$. We define 
\[
 g_{n, j}(\un{x}) = f(\un{x})^j (s(x_1) s(x_3))^{c(p) \alpha n-j}, 
\]
where $f(\un{x})$ and $s(x)$ are defined in Lemma \ref{lem:moment2}. We define the normalizations 
$\eta \triangleq i/n$ and $\gamma \triangleq j/(\alpha n)$.  
 Let $\un{t}=(t_1,t_2,t_3)$ be a positive solution of the saddle point equations  
\begin{equation}
a_g(\un{x}) \triangleq \left\{ \frac{x_i}{n} \frac{\partial \ln\brc{g_{n, j}(\un{x})}}{\partial x_i} \right\}_{i=1}^3 = 
\{r (1-\eta), r \eta, r (1-\eta)\}. 
\end{equation}
Then $\mathrm{coef}\brc{g_{n, j}(\un{x}),\un{x}^{\un{i}}}$ can be approximated as ,
\[
\mathrm{coef}\brc{g_{n, j}(\un{x}),\un{x}^{\un{i}}} = \frac{g_{n, j}(\un{t})}{({\un{t}})^{\un{i}}\sqrt{(2 \pi n)^3 \abs{B_g(\un{t})}}}(1+o(1)),
\]
using the saddle point method for multivariate polynomials, 
where $B_g(\un{x})$ is a $3 \times 3$ matrix whose elements are given by 
$B_{i,j}=x_j \D{a_{gi}(\un{x})}{x_j}=B_{j,i}$ and $a_{g i}(\un{x})$ is the $i^\text{\tiny th}$ coordinate of $a_g(\un{x})$.
\end{theorem}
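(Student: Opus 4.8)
The plan is to establish Theorem~\ref{thm:multisaddle} by a three-dimensional Laplace analysis of the Cauchy coefficient integral, extending the one-dimensional Hayman estimate of Lemma~\ref{lem:hayman}. First I would write the coefficient as a triple contour integral over circles of radii $\un{t}=(t_1,t_2,t_3)$ centred at the origin, which after the substitution $x_\ell = t_\ell e^{\mathbf{i}\theta_\ell}$ becomes
\[
\mathrm{coef}\brc{g_{n,j}(\un{x}),\un{x}^{\un{i}}} = \frac{1}{(2\pi)^3}\int_{-\pi}^{\pi}\!\int_{-\pi}^{\pi}\!\int_{-\pi}^{\pi} \frac{g_{n,j}(\un{t}\,e^{\mathbf{i}\un{\theta}})}{(\un{t})^{\un{i}}\,e^{\mathbf{i}\langle \un{i},\un{\theta}\rangle}}\, d\theta_1\,d\theta_2\,d\theta_3,
\]
with $\un{t}\,e^{\mathbf{i}\un{\theta}} = (t_1 e^{\mathbf{i}\theta_1},t_2 e^{\mathbf{i}\theta_2},t_3 e^{\mathbf{i}\theta_3})$ and the radii still free. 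Setting $H(\un{\theta}) = \ln g_{n,j}(\un{t}\,e^{\mathbf{i}\un{\theta}}) - \mathbf{i}\langle\un{i},\un{\theta}\rangle$, so that the numerator equals $e^{H(\un{\theta})}$, a direct differentiation gives $\partial H/\partial\theta_\ell|_{\un{\theta}=0} = \mathbf{i}\brc{n\,a_{g\ell}(\un{t}) - i_\ell}$. The idea is to choose the radii so that $\un{\theta}=0$ is a stationary point; since $\un{i}=(r(n-i),ri,r(n-i))$, this is precisely the saddle point system $a_g(\un{t}) = \{r(1-\eta),r\eta,r(1-\eta)\}$ of the statement. A positive solution $\un{t}$ exists because $g_{n,j}$ has non-negative coefficients, so each $a_{g\ell}$ is strictly increasing in $\ln x_\ell$ and sweeps the interior of the range of attainable normalized exponents.

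Next I would Taylor-expand $H$ about the origin. The first-order term vanishes by the saddle point equations, and a second differentiation gives $\partial^2 H/\partial\theta_\ell\partial\theta_m|_{\un{\theta}=0} = -n\,x_m\,\partial a_{g\ell}/\partial x_m = -n\,B_{\ell,m}(\un{t})$, so that the Hessian of $H$ at the origin is exactly $-n\,B_g(\un{t})$. Passing to logarithmic coordinates $u_\ell=\ln x_\ell$ shows that $B_g = \tfrac1n (\partial^2 \ln g_{n,j}/\partial u_\ell \partial u_m)$ is symmetric and equals the covariance matrix of the exponential tilting of the monomials of $g_{n,j}$, which makes its symmetry and positive semidefiniteness transparent. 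Consequently, near the origin,
\[
g_{n,j}(\un{t}\,e^{\mathbf{i}\un{\theta}})\,e^{-\mathbf{i}\langle\un{i},\un{\theta}\rangle} \approx g_{n,j}(\un{t})\,\exp\brc{-\tfrac{n}{2}\,\un{\theta}^{\top} B_g(\un{t})\,\un{\theta}},
\]
and integrating this Gaussian over $\mathbb{R}^3$ contributes $g_{n,j}(\un{t})\,(\un{t})^{-\un{i}}\,(2\pi n)^{-3/2}\,\abs{B_g(\un{t})}^{-1/2}(1+o(1))$, which is exactly the claimed main term since $(2\pi n)^{-3/2}=1/\sqrt{(2\pi n)^3}$.

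The step I expect to be the main obstacle is the localization: I must show that the part of the integral over $\abs{\un{\theta}} > n^{-1/2+\eps}$ is negligible compared with the main term, which is the three-dimensional analogue of Hayman admissibility. Using the non-negativity of the coefficients together with the fact that $f$ and $s$ are genuine degree-$k$ polynomials, whose monomial exponents are not confined to a proper affine sublattice of $\mathbb{Z}^3$, one proves the strict bound $\abs{g_{n,j}(\un{t}\,e^{\mathbf{i}\un{\theta}})} < g_{n,j}(\un{t})$ for $\un{\theta}\neq 0$ with a uniform quadratic gap, so the tail is suppressed by a factor $e^{-\Omega(n^{2\eps})}$. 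The same full-dimensionality of the support forces $B_g(\un{t})$ to be strictly positive definite, hence $\abs{B_g(\un{t})}\neq 0$ and the Gaussian integral above is legitimate; this is where the hypothesis $0 < \lim_{n\to\infty} i/n < 1$ (and the analogous interior condition on $\gamma=j/(\alpha n)$) is used, to keep the normalized exponents strictly inside the feasible region. Once admissibility and positive definiteness are verified for this specific $g_{n,j}$, the conclusion follows from the general multivariate saddle point theorem of Appendix~D of \cite{RiU08}, which I would invoke to assemble these estimates into the stated approximation.
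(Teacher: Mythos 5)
Your proposal is correct in outline, but note that the paper itself contains no proof of this theorem: it is stated as an imported result, with the preceding sentence pointing to Appendix D of \cite{RiU08} for the technical exposition. What you have written is essentially a reconstruction of that standard machinery --- the Cauchy integral over a torus of radii $\un{t}$, the choice of radii that kills the linear term (your computation $\partial H/\partial\theta_\ell|_{\un{\theta}=0}=\mathbf{i}\brc{n\,a_{g\ell}(\un{t})-i_\ell}$ correctly recovers the saddle point system of the statement), the identification of the Hessian with $-n\,B_g(\un{t})$, and the Gaussian integration producing exactly the constant $1/\sqrt{(2\pi n)^3\abs{B_g(\un{t})}}$. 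Relative to the paper you are doing strictly more work; relative to the cited source you are following the same route, and your closing appeal to ``the general multivariate saddle point theorem of Appendix D of \cite{RiU08}'' is mildly circular: either you establish admissibility yourself and conclude directly, or you only verify the hypotheses of the cited theorem and invoke it, but the two should not be mixed in one argument.

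One detail in your localization and positive-definiteness step needs repair. You argue that the tail is negligible because the monomial exponents of $f$ \emph{and} $s$ are not confined to a proper affine sublattice of $\mathbb{Z}^3$. But $s(x_1)s(x_3)$ involves no $x_2$ at all, so its support lies in the plane $\{(a,0,c)\}$ and the factor $(s(x_1)s(x_3))^{c(p)\alpha n-j}$ contributes no decay in the $\theta_2$ direction; all of it must come from $f^j$ alone. This is fine, because the support of $f$ does generate $\mathbb{Z}^3$ (it contains the exponent vectors $(0,1,0)$, $(1,1,0)$, $(0,1,1)$, $(1,0,1)$, whose differences span $\mathbb{Z}^3$), and because $j=\gamma\alpha n$ with $\gamma\geq c(p)-(1-p)2^{-k}>0$ grows linearly in $n$ --- the lower limit of the $j$-summation in (\ref{eq:moment2}) is precisely what guarantees a gap of order $e^{-\Omega(n)}$ outside the $n^{-1/2+\eps}$ neighborhood. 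The same point affects positive definiteness: $\frac{1}{n}\ln g_{n,j}$ decomposes as $\gamma\alpha\ln f+(c(p)-\gamma)\alpha\brc{\ln s(x_1)+\ln s(x_3)}$, so $B_g$ is a nonnegative combination of the per-factor matrices of $f$ and of $s(x_1)s(x_3)$, and only the first is positive definite (the second has a vanishing $x_2$ row and column); strict positivity of $\gamma$ is what makes $B_g$ nonsingular. With these points made explicit, your argument is complete and matches the exposition the paper delegates to \cite{RiU08}.
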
 

In the following theorem we derive lower bound on the $p$-satisfiability threshold by evaluating the second moment of 
$N(n, \alpha, p)$
with the help of Theorem \ref{thm:multisaddle}. 
\begin{theorem}\label{thm:reglowerbound}
Consider regular random $k$-SAT formulas with literal degree $r$. Let $S(i, j)$ denote the 
$(i, j)^{\text{th}}$ summation term in (\ref{eq:moment2}). Define the normalization $\eta=i/n$ and 
$\gamma=j/(\alpha n)$. If $S(n/2, n \alpha c(p)^2)$ is the dominant term i.e.~for  
$\eta \in [0, 1], \gamma \in [(c(p)-2^{-k} (1-p)), c(p)], \eta \neq \frac{1}{2}$, and 
$\gamma \neq c(p)^2$ 
\begin{equation}\label{eq:snbytwodominant}
 \lim_{n \to \infty} \frac{\ln\brc{S\brc{\frac{n}{2}, n \alpha c(p)^2 }}}{n} > 
\lim_{n \to \infty} \frac{\ln\brc{S(\eta n, \gamma \alpha n)}}{n}, 
\end{equation}
then for some positive constants $c$, $c'$ 
\begin{equation}\label{eq:positiveprobbound}
 \prob\brc{N(n, \alpha, p) > 0} \geq  c' n^{-c}.  
\end{equation}
Let $r^*(p)$ be the largest literal degree for which $S(n/2, n \alpha c(p)^2)$ is the dominant term, i.e. 
(\ref{eq:snbytwodominant}) holds. Then the threshold $\alpha(p)$ defined in (\ref{eq:defalpha}) is lower bounded by 
$\alpha(p) \geq \alpha_l^*(p) \triangleq \frac {2 r^*(p)}{k}$. 
\end{theorem}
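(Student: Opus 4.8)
The plan is to establish the second moment bound via the standard inequality \eqref{eq:smmethod}, reducing everything to showing that $E(N^2)$ is dominated by the ``diagonal'' term corresponding to uncorrelated overlap. The key observation is that the two sums in \eqref{eq:moment2} each range over $O(n)$ values, so if we can identify a single summand $S(i,j)$ whose exponential growth rate strictly exceeds that of every other summand, then $E(N^2) = \Theta(\mathrm{poly}(n)) \cdot S(i^*,j^*)$, and the whole problem collapses to comparing this dominant term against $E(N)^2$ from Lemma \ref{lem:moment1hayman}.

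First I would make the ``independence heuristic'' precise: when two truth assignments share an overlap of $i = n/2$ variables and their satisfied-clause sets overlap in $j = n\alpha c(p)^2$ clauses, the pair behaves as if the two assignments were chosen independently, so one expects $S(n/2, n\alpha c(p)^2)$ to match $E(N)^2$ at the exponential scale. I would verify this by computing $\lim_{n\to\infty}\frac{1}{n}\ln S(n/2, n\alpha c(p)^2)$ using Stirling's approximation on the binomial/factorial terms together with the multidimensional saddle point approximation of Theorem \ref{thm:multisaddle} for the coef-term, and checking that it equals $2\lim_{n\to\infty}\frac{1}{n}\ln E(N(n,\alpha,p))$ as given in Lemma \ref{lem:moment1hayman}. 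Granting the hypothesis \eqref{eq:snbytwodominant}, the sum over the remaining $(\eta,\gamma)$ is exponentially negligible, so $E(N^2) \le \mathrm{poly}(n)\, S(n/2, n\alpha c(p)^2) = \mathrm{poly}(n)\, E(N)^2$, and \eqref{eq:smmethod} yields $\prob(N>0) \ge E(N)^2/E(N^2) \ge c' n^{-c}$, establishing \eqref{eq:positiveprobbound}.

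Next I would promote the polynomially-small lower bound to a high-probability statement: Corollary \ref{cor:problower} says precisely that if a regular random formula is $p$-satisfiable with probability at least $n^{-c}$, then it is $p'$-satisfiable w.h.p.\ for every $p' < p$. Applying this to the degree $r^*(p)$, which by definition is the largest degree for which the dominance condition \eqref{eq:snbytwodominant} holds, shows that at density $\alpha = 2r^*(p)/k$ (using the regularity relation $\alpha = 2r/k$ established in the introduction) the formula is $p'$-satisfiable w.h.p.\ for all $p' < p$. By the definition \eqref{eq:defalpha} of $\alpha(p)$ as the supremum of densities admitting $p$-satisfiability w.h.p., this gives $\alpha(p) \ge \alpha_l^*(p) = 2r^*(p)/k$.

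\textbf{The main obstacle} is the verification of \eqref{eq:snbytwodominant}, i.e.\ that $(\eta,\gamma) = (1/2, c(p)^2)$ is the \emph{global} maximizer of the exponential rate of $S(\eta n, \gamma\alpha n)$ over the whole region $\eta\in[0,1]$, $\gamma\in[c(p)-2^{-k}(1-p),\,c(p)]$. This is exactly the delicate point that forced Achlioptas, Naor, and Peres \cite{ANP07} to introduce \emph{balanced} (weighted) assignments for the uniform model, because the naive symmetric assignment fails the dominance test at large $\alpha$; here the regularity of the degree sequence is what one hopes will rescue the plain second moment. Since the statement frames \eqref{eq:snbytwodominant} as a hypothesis and defines $r^*(p)$ as the largest degree for which it holds, I would not attempt to prove dominance analytically in closed form --- rather, I would treat it as a numerically-checkable condition, consistent with the paper's stated aim of evaluating the lower bound numerically and observing convergence to the uniform-model bound of \cite{ANP07} as $k$ grows.
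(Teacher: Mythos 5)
Your proposal follows the paper's proof essentially step for step: you bound $E(N^2)$ by a polynomial number of summands times the dominant term $S(n/2, n\alpha c(p)^2)$, verify via Theorem \ref{thm:multisaddle} and Stirling that this term's growth rate equals twice that of $E(N)$ from Lemma \ref{lem:moment1hayman} (the paper carries this out through the saddle-point symmetry $t_1=t_3$, $t_2=t_1^2$ and the identity $f(x_k,x_k^2,x_k)=s(x_k)^2$), apply the second moment inequality \eqref{eq:smmethod} to obtain \eqref{eq:positiveprobbound}, and boost via Corollary \ref{cor:problower} to conclude $\alpha(p)\geq 2r^*(p)/k$. This is the same decomposition, the same key computation, and the same boosting step as the paper, so the proposal is correct and takes essentially the identical route.
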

\begin{proof}
Assuming (\ref{eq:snbytwodominant}) holds, then for $n$ large enough
\begin{equation}\label{eq:moment2ub}
E\brc{N(n, \alpha, p)^2} \leq (n+1) (\alpha (1-p) 2^{-k} n + 1) S\brc{\frac{n}{2}, n \alpha c(p)^2}. 
\end{equation}
From (\ref{eq:moment2}) and Theorem \ref{thm:multisaddle}, the growth rate of $S(\eta n, \gamma \alpha n)$ is given by, 
\begin{multline}\label{eq:defsgamma}
s(\eta, \gamma) \triangleq \lim_{n \to \infty} \frac{\ln\brc{S(\eta n, \gamma \alpha n)}}{n} = (1-k \alpha) (\ln(2) + h(\eta)) \\ 
+ \alpha h(c(p)) + \alpha c(p) h\brc{\frac{\gamma}{c(p)}} + \alpha (1-c(p)) h\brc{\frac{c(p)-\gamma}{(1-c(p))}} \\ 
+ \gamma \alpha \ln\brc{f(t_1, t_2, t_3)} 
+ \alpha (c(p)-\gamma) \brc{\ln(s(t_1)) + \ln(s(t_3))} \\ - r (1-\eta) (\ln(t_1)+\ln(t_3)) -r \eta \ln(t_2),  
\end{multline}
where $t_1, t_2, t_3$ is a positive solution of the saddle point equations as defined in 
Theorem \ref{thm:multisaddle}, 
\begin{equation}\label{eq:saddlepoint}
a_g(\un{t}) = \left\{r (1-\eta), r\eta, r (1-\eta)\right\}. 
\end{equation}
In order to compute the maximum exponent of the summation terms, we compute its partial derivatives with respect to $\eta$ 
and $\gamma$ and equate them to zero, which result in the following respective equations. 
\begin{eqnarray}
(1-k \alpha) \ln\frac{1-\eta}{\eta} + r \ln\brc{\frac{t_1 t_3}{t_2}} & = & 0, \label{eq:etaderivative}\\
\ln\brc{\frac{(c(p)-\gamma)^2}{\gamma (1-2 c(p)+\gamma)}} + \ln\brc{\frac{f(t_1, t_2, t_3)}{s(t_1) s(t_3)}} 
& = & 0. \label{eq:gammaderivative}
\end{eqnarray}
Note that the partial derivatives of $t_1, t_2$ and $t_3$ w.r.t. $\eta$ and $\gamma$ vanish because of the saddle point equations given in 
(\ref{eq:saddlepoint}). 
Every positive solution $(t_1, t_2, t_3)$ of (\ref{eq:saddlepoint}) satisfies 
$t_1=t_3$  as (\ref{eq:saddlepoint}) and $f(t_1, t_2, t_3)$ are symmetric in 
$t_1$ and $t_3$. If $\eta = 1/2, \gamma=c(p)^2$ is a maximum, then the vanishing derivative 
in (\ref{eq:etaderivative}) and equality of 
$t_1$ and $t_3$ imply $t_2 = t_1^2$. We substitute $\eta=1/2$, $t_1=t_3$, and 
$t_2=t_1^2$ in (\ref{eq:saddlepoint}). This reduces (\ref{eq:saddlepoint}) to the saddle point equation corresponding to 
the polynomial $t(x)$ defined in Lemma \ref{lem:moment1hayman} whose solution is 
denote by $x_k$. By observing $f(x_k, x_k^2, x_k) = s(x_k)^2$, we have 
\begin{multline}\label{eq:Snbytwo}
\! \! \! \! \! \!S\brc{\frac{n}{2}, \alpha c(p)^2 n} = 
\frac{k^{3/2}}{32 \pi^2 c(p)^2 (1-c(p))^2 \sqrt{|B_g(x_k, x_k^2, x_k)|} n^2} \times \\
 e^{2 n \brc{(1-k \alpha) \ln(2) + \alpha h(c(p)) + \alpha c(p) \ln(s(x_k)) - \frac{k \alpha}{2} \ln(x_k)}} (1+o(1)). 
\end{multline}
Using the relation that $t(x) = \frac{s(x)}{x}$,  we note that the exponent of $S(n/2, \alpha c(p)^2 n)$ is 
twice the exponent of the first moment of the total number of solutions as given in (\ref{eq:moment1hayman}). 
We substitute (\ref{eq:Snbytwo}) in to (\ref{eq:moment2ub}). Then we use Lemma \ref{lem:moment1hayman} and 
(\ref{eq:moment2ub}) in the second moment 
method:
\begin{eqnarray*}
\prob\brc{N(n, \alpha, p) > 0}  & \geq & \frac{E(N(n, \alpha, p)^2)}{E(N(n, \alpha, p)^2)}, \\ 
&\geq& \frac{4 \pi \sqrt{|B_g(x_k, x_k^2, x_k)|}}{b_t(x_k) \alpha^2 \sqrt{k} n} (1+o(1)).
\end{eqnarray*}
Clearly, if the maximum of the growth rate of $S(\eta n, \gamma \alpha n)$ is not
achieved at $\eta = 1/2$ and $\gamma=c(p)^2$ , then the lower bound
given by the second moment method converges to zero exponentially fast.  
By using Corollary \ref{cor:problower}, we obtain the desired lower bound on $\alpha(p)$. 
  This proves the theorem. 
\end{proof}

In the next section we discuss the obtained lower and upper bounds on the $p$-satisfiability 
threshold. 
\section{Bounds on Threshold and Discussion}
In Table \ref{tb:regthreshold}, we compute the ratio of $\alpha_l^\star(p)$ and
$\alpha_u^\star(p)$ for $p=0.1,\cdots,0.9$ and $k=3,6,12$, where
$\alpha_l^\star(p)$ is the lower bound on $\alpha(p)$ obtained from Theorem
\ref{thm:reglowerbound} and $\alpha_u^*(p)$ is the upper bound on $\alpha^*(p)$
defined in Lemma \ref{lem:regupperbound}. Note that the case $p=1$ was already 
solved in \cite{RARS10}. 
\begin{table}[h]
\begin{center}
\begin{tabular}{|c|c|c|c|}
\hline 
$p$ & $k=3$ & $k=6$ & $k=12$ \\
\hline
$0.1$ & $0.252$ & $0.717$ & $0.977$ \\
\hline
$0.2$ & $0.258$ & $0.720$ & $0.979$ \\
\hline
$0.3$ & $0.272$ & $0.738$ & $0.980$ \\
\hline
$0.4$ & $0.281$ & $0.755$ & $0.981$ \\
\hline
$0.5$ & $0.295$ & $0.765$ & $0.983$ \\
\hline
$0.6$ & $0.308$ & $0.782$ & $0.986$ \\
\hline
$0.7$ & $0.325$ & $0.801$ & $0.988$ \\
\hline
$0.8$ & $0.344$ & $0.822$ & $0.990$ \\
\hline
$0.9$ & $0.402$ & $0.855$ & $0.993$ \\
\hline
\end{tabular} 
\end{center}
\caption{Value of the ratio $\alpha_l^*(p)/\alpha_u^*(p).$ 
}\label{tb:regthreshold} 
\end{table}
In order to apply the second moment method, we have to verify that $s(\eta, \gamma)$,
defined in (\ref{eq:defsgamma}), attains its maximum at $\eta=\frac{1}{2}, \gamma=c(p)^2$ 
over $[0, 1] \times [c(p) - (1-p) 2^{-k}, c(p)]$.  This requires that 
$\eta=\frac{1}{2}, \gamma= c(p)^2$ is the only
positive solution of the system of equations consisting of
(\ref{eq:saddlepoint}), (\ref{eq:etaderivative}) and
(\ref{eq:gammaderivative}) which corresponds to a maximum. The system of equations (\ref{eq:saddlepoint}),
(\ref{eq:etaderivative}), and (\ref{eq:gammaderivative}) is equivalent to a
system of polynomial equations.  For small values of $k$ and $p$ close to one, we can solve this
system of polynomial equations and verify the desired conditions. For larger 
values of $k$, the degree of monomials in (\ref{eq:etaderivative}) grows
exponentially in $k$. Thus, solving (\ref{eq:saddlepoint}),
(\ref{eq:etaderivative}), and (\ref{eq:gammaderivative}) becomes
computationally difficult. However, $s(\eta, \gamma)$ can be easily computed as its
computation requires solving only (\ref{eq:saddlepoint}), where the maximum
monomial degree is linear in $k$. Thus, the desired condition for maximum of
$s(\eta, \gamma)$ at $\eta=\frac{1}{2}, \gamma=c(p)^2$ can be verified numerically in an efficient
manner.  

From the Table \ref{tb:regthreshold}, we see that as $k$ becomes larger the ratio 
$\alpha_l^*(p)/\alpha_u^*(p)$ gets closer to one. Our belief is that indeed as 
$k$ becomes larger and larger, the ratio $\alpha_l^*(p)/\alpha_u^*(p)$ converges to one. 
Our main future goal is to derive explicit expression for $\alpha_l^*(p)$ as $k$ becomes 
larger. 

\section*{Acknowledgement}
This work has been supported by Swedish research council (VR) through KTH Linnaeus center ACCESS. 

\bibliographystyle{siam}
\bibliography{ksat} 

\newcommand{\SortNoop}[1]{}
\begin{thebibliography}{10}

\bibitem{AcM06}
{\sc D.~AChlioptas and C.~Moore}, {\em Random k-{SAT}: Two moments suffice to
  cross a sharp threshold}, {SIAM} J. {COMPUT.}, 36 (2006), pp.~740--762.

\bibitem{ANP07}
{\sc D.~Achlioptas, A.~Naor, and Y.~Peres}, {\em On the maximum satisfiability
  of random formulas}, Journal of the Association of Computing Machinary
  (JACM), 54 (2007).

\bibitem{AcP04}
{\sc D.~AChlioptas and Y.~Peres}, {\em The threshold for random {$k$-SAT} is
  {$2^k \ln(2) -O(k)$}}, Journal of the American Mathematical Society, 17
  (2004), pp.~947--973.

\bibitem{BaB05}
{\sc O.~Barak and D.~Burshtein}, {\em Lower bounds on the spectrum and error
  rate of {LDPC} code ensembles}, in International Symposium on Information
  Theory, Adelaide, Australia, 2005.

\bibitem{BDIS05}
{\sc Y.~Boufkhad, O.~Dubois, Y.~Interian, and B.~Selman}, {\em Regular random
  k-{SAT}: Properties of balanced formulas}, Journal of Automated Reasoning,
  (2005).

\bibitem{BFU93}
{\sc A.~Z. Broder, A.~M. Frieze, and E.~Upfal}, {\em On the satisfiability and
  maximum satisfiability of random 3-{CNF} formulas}, Proc. 4th Annual
  Symposium on Discrete Algorithms,  (1993), pp.~322--330.

\bibitem{Vra06}
{\sc V.~Rathi}, {\em On the asymptotic weight and stopping set distributions of
  regular {LDPC ensembles}}, IEEE Trans. Inform. Theory, 52 (2006),
  pp.~4212--4218.

\bibitem{Vra08}
\leavevmode\vrule height 2pt depth -1.6pt width 23pt, {\em Non-binary LDPC
  codes and EXIT like functions}, PhD thesis, Swiss Federal Institute of
  Technology (EPFL), Lausanne, 2008.

\bibitem{RARS10}
{\sc V.~Rathi, E.~Aurell, L.~Rasmussen, and M.~Skoglund}, {\em Bounds on
  thresholds of regular random $k$-{SAT}}.
\newblock Accepted to International Conference on Theory and Applications of
  Satisfiability Testing (SAT) 2010.

\bibitem{RARS10t}
\leavevmode\vrule height 2pt depth -1.6pt width 23pt, {\em Satisfiability and
  maximum satisfiability of regular random $k$-sat: Bounds on thresholds}.
\newblock in preparation for submission to IEEE Transactions on Information
  Theory.

\bibitem{RiU08}
{\sc T.~Richardson and R.~Urbanke}, {\em Modern Coding Theory}, Cambridge
  University Press, 2008.

\end{thebibliography}
\end{document}